\newcommand{\bbZ}{\mathbb{Z}}
\newcommand{\bbN}{\mathbb{N}}
\newcommand{\bbR}{\mathbb{R}}
\newcommand{\bbC}{\mathbb{C}}
\newcommand{\bbP}{\mathbb{P}}
\newcommand{\calC}{\mathcal{C}}
\newcommand{\calD}{\mathcal{D}}
\newcommand{\F}{\mathbb{F}}
\newcommand{\N} {\ensuremath{{\rm N}}}
\newcommand{\Tr} {\ensuremath{\textnormal{Tr}}}
\newcommand{\RM} {\mathcal{R}\mathcal{M}}
\newcommand{\PRM} {\mathcal{P}\mathcal{R}\mathcal{M}}
\newcommand{\GL} {{\rm GL}}
\newcommand{\PGL} {{\rm PGL}}
\DeclareMathOperator{\supp}{supp} \DeclareMathOperator{\id}{id}
\DeclareMathOperator{\Stab}{Stab} 
\DeclareMathOperator{\weight}{wt}
\newtheorem{thm1}{Theorem}
\newtheorem{que1}{Question}
\newtheorem*{thm*}{Theorem}
\newtheorem{thm}{Theorem}[section]
\newtheorem{pro}{Proposition}[section]
\newtheorem{lem}{Lemma}[section]
\newtheorem{cor}{Corollary}[section]
\newtheorem{que}{Question}[section]
\title{On the Stabilizer of Weight Enumerators of Linear Codes}
\author{Martino Borello, Olivier Mila}
\begin{document}

\begin{abstract}
This paper investigates the relation between linear codes and the stabilizer in ${\rm GL}_2(\bbC)$ of their weight enumerators. We prove a result on the finiteness of stabilizers and give a complete classification of linear codes with infinite stabilizer in the non-binary case. We present an efficient algorithm to compute explicitly the stabilizer of weight enumerators and we apply it to the family of Reed-Muller codes to show that some of their weight enumerators have trivial stabilizer.
\end{abstract}

\maketitle

\vspace{7mm}
\tableofcontents
\vspace{7mm}

\section{Introduction}\label{section-intro}

Counting $\mathbb{F}_q$-rational points on hypersurfaces is, in
general, a very difficult task. In \cite{El}, Noam Elkies uses
Coding Theory to give the point counts of cubic surfaces in the
$3$-dimensional projective space over $\mathbb{F}_q$. His idea was later
generalized by Nathan Kaplan in his PhD thesis \cite{Na}. The
starting point of their method is that a counting of points of
varieties belonging to a family $\mathcal{F}$ can be deduced by the
determination of the weight enumerator of a linear code associated
to $\mathcal{F}$. In fact, such code is related to well-known ones
in Coding Theory, namely the Reed-Muller codes (in their affine or
projective version).

One of the most remarkable theorems in Coding Theory is Andrew
Gleason's 1970 Theorem \cite{Gl}, which classifies the weight enumerators
of self-dual doubly-even codes.
The crucial argument in the proof this theorem is to observe that
self-duality and divisibility give interesting invariants for the
weight enumerator of self-dual codes:
if $p(x,y)\in \mathbb{Z}[x,y]$ is the homogeneous weight enumerator
of a self-dual doubly-even binary code, then
$$p(x,y)=p(x,iy)=p\left(\frac{x+y}{\sqrt{2}},\frac{x-y}{\sqrt{2}}\right).$$
This fact has a lot of significant implications on the shape of
$p(x,y)$ and consequently on the family of self-dual doubly-even
codes. Along those lines, we look for invariances of weight
enumerators of Reed-Muller codes. A divisibility condition is proved
by James Ax already in 1964 \cite{A} for the affine version of
Reed-Muller codes. The counterpart for the projective version can be
easily deduced. On the other hand, Reed Muller codes are not, in
general, self-dual. An analogue of the condition coming from
self-duality is much harder to be derived, since the nature of
weight enumerator is essentially combinatorial and not geometric, as
remarked in \cite{Ken94}.

We decided to approach the problem with a different strategy: since
some weight enumerators of Reed-Muller codes are known, we wanted to
calculate their invariants directly, with the hope to find general
invariants of a bigger family of codes. This happened to be not so easy in practice,
because the calculation of all invariants directly is
computationally complex. So we developed a new method based on the
action of ${\rm PGL}_{2}(\mathbb{C})$ on the projective line
$\mathbb{P}^1(\mathbb{C})$ (which is simply $3$-transitive). This
trick gives a huge restriction on potential elements in ${\rm
GL}_{2}(\mathbb{C})$.

Using this technique, we were able to prove the following.
\begin{thm1}\label{thm-1} The homogeneous weight enumerator of linear code has a finite number of
invariants if and only if its non-homogeneous version has at least
$3$ distinct complex roots.
\end{thm1}

We designed an algorithm to calculate approximated invariants, and this allowed us
to prove the existence of Reed Muller codes whose weight enumerator have trivial
stabilizer: for example, the Reed Muller codes which encodes conics
in the affine plane over $\mathbb{F}_4$.

\begin{thm1}
  The following codes have weight enumerator with trivial stabilizer in ${\rm PGL}_2(\bbC)$, i.e. with
  stabilizer in ${\rm GL}_2(\bbC)$ consisting only of scalar matrices:
  \[
  \RM_{4}(2,2), \quad \RM_{4}(3,2), \quad \RM_{5}(2,2),
  \]
  \[
  \PRM_{5}(3,2), \quad \PRM_{5}(3,2)^\perp = \PRM_{5}(5,2).
  \]
\end{thm1}

This suggests that no general non-trivial invariant can be found for
Reed Muller codes. However, three main questions arise from what we have
done.
\begin{que1}\label{Q1}
What are the codes whose weight enumerators have at most $2$ distinct
complex roots?
\end{que1}
\begin{que1}\label{Q2}
Can we find, with the algorithm described, new families of codes for
which there exist non-trivial invariants?
\end{que1}
\begin{que1}\label{Q3}
If the answer to Question \ref{Q2} is positive, can we determine
unknown weight enumerators of large codes?
\end{que1}

The present paper gives an almost complete answer to Question
\ref{Q1}. In particular, we prove the following classification
theorem.

\begin{thm1}
Let $\calC \subset \F_q^n$ be a linear code with weight enumerator
$W_\calC(x) \in \bbZ[x]$ having at most two distinct complex roots.
Then only the following possibilities may hold:
\begin{itemize}
\item[(a)] $W_\calC(x)=x^n$ and $\calC=\{\underline{0}\}$;
\item[(b)] $W_\calC(x)=(x+(q-1))^n$ and $\calC= \F_q^n$;
\item[(c)] $n$ is even, $W_\calC(x)=(x^2+(q-1))^{n/2}$ and, if $q\ne 2$,
$\calC\cong \bigoplus_{i=1}^{n/2}\langle(1,1)\rangle_{\F_q}$.
\end{itemize}
\end{thm1}

Question \ref{Q2} and \ref{Q3} are of course more general, and harder to answer.
However, while the approach fails for Reed-Muller codes, it does not imply that
the answer to these questions is negative.


In \S\ref{sec-back} we present the necessary background of Coding
Theory and we define invariants for weight enumerators. The
classification of linear codes with weight enumerator with at most
two distinct complex roots is given in \S\ref{sec-roots}. In
\S\ref{sec-proof1} we prove Theorem \ref{thm-1} about the finiteness
of the stabilizer. Finally, \S\ref{sec-algo} and \S\ref{sec-triv}
are devoted to the presentation of the algorithm and its
applications.


\section{Background}\label{sec-back}

A \emph{linear code} $\calC$ is a subspace of $\F_q^n$, where $n$ is
a positive integer called the \emph{length} of the code. A
\emph{generator matrix} of a linear code $\calC$ is a matrix whose
rows generates $\calC$. Elements of $\calC$ are called
\emph{codewords}. The support of a codeword $c\in\calC$, denoted
$\supp(c)$, is defined as follows:
$$\supp(c):=\{i\in \{1,\ldots,n\} \ | \ c_i\neq 0\}.$$
The \emph{weight} $\weight(c)$ of a codeword $c$ is the cardinality its support.
The (one-variable) \emph{weight enumerator} of $\calC \leq
\F_q^n$ is the polynomial
\[
W_\calC(x) :=  \sum_{c \in \calC} x^{n- \weight(c)} = \sum_{i=0}^n
a_i x^i, \quad a_i := \#\{\text{codewords of weight }n-i\}.
\]
We will call \emph{homogeneous weight enumerator} the homogeneous
version of $W_\calC(x)$, i.e.
\[
W_\calC(x,y) = \sum_{c \in \calC} x^{n- \weight(c)} y^{\weight(c)} =
\sum_{i=0}^n a_i x^i y^{n-i}.
\]
If $\calC\leq \F_q^m$ and $\calD\leq \F_q^n$ are linear codes with
generator matrices $C$ and $D$ respectively, their \emph{direct sum}
is the vector space $\calC \oplus \calD$ naturally embedded in
$\F_q^{m+n}$, i.e. the code with generator matrix
$$\left[\begin{array}{cc}C & 0 \\ 0 & D\end{array}\right].$$
Observe that
\[
W_{(\calC \oplus \calD)} = W_\calC \cdot W_\calD
\]
for the one-variable and homogeneous weight enumerators.

A \emph{monomial transformation} $f:\F_q^n \to \F_q^n$ is a linear
transformation of the form
\begin{align*}
  f: \F_q^n &\longrightarrow \F_q^n \\
  v & \longmapsto D P v
\end{align*}
where $D$ is an $n \times n$ diagonal matrix with non-zero diagonal
entries, and $P$ is an $n \times n$ permutation matrix. Two codes
are said to be \emph{equivalent} if one is the image of the other
under a monomial transformation. Observe that in the case $q=2$, a
monomial transformation is just a permutation. It is easy to observe
that two equivalent codes have the same weight enumerator.


\begin{par}{\bf Convention:}
Since we are interested in weight enumerators, we will usually identify
codes up to equivalence.
In particular, a generator matrix of a code $\calC$ will mean a generator matrix of some
code equivalent to $\calC$.
\end{par}


$\GL_2(\bbC)$ acts naturally on $\bbC[x,y]$ as follows:
\[
A\cdot p(x,y)=p(ax+by,cx+dy) \quad
\text{for } A=\left[\begin{smallmatrix}a&b\\c&d\end{smallmatrix}\right]\in
\GL_2(\bbC), \text{ and } p\in \bbC[x,y].
\]
Every element in the stabilizer
$\Stab_{\GL_2(\bbC)}(p)$ of $p$ is called \emph{invariant} of $p$.

There are two main properties which give rise to invariants of weight enumerators.

\begin{enumerate}
\item Let $\Delta > 1$ be an integer.
A linear code $\calC$ is \emph{divisible by $\Delta$} if the weight of every codeword of
$\calC$ is divisible by $\Delta$.
A code is \emph{divisible}, if it is divisible by some $\Delta>1$.
In terms of weight enumerator, a
linear code $\calC$ is divisible by $\Delta$ if and only if
$$D_{\Delta}:=\left[\begin{array}{cc}1&0\\0&\zeta_{\Delta}\end{array}\right]\in \Stab_{\GL_2(\bbC)}(W_\calC(x,y)),$$
where $\zeta_{\Delta}$ is a primitive $\Delta$-root of unity.

\item The \emph{dual} of a linear code $\calC \leq \F_q^n$ (denoted
$\calC^\perp$) is the orthogonal space with respect to the standard
inner product of $\F_q^n$, $\langle x, y \rangle = \sum_{i=1}^n x_i
y_i $ for $x, y \in \F_q^n$, i.e.
$$\calC^\perp:=\{v\in \F_q^n \ | \ \langle v,c\rangle=0 \ \text{for all }c\in \calC\}.$$
The relation between the weight enumerator of some code $\calC$ and its dual $\calC^\perp$ is given
by MacWilliams' Theorem.
\begin{thm*}[MacWilliams,  \cite{MS}]
  Let $\calC\leq \F_q^n$ be a linear code, and $\calC^\perp$ its dual. Then
  \[
  W_{\calC^\perp}(x,y) = \frac{1}{\#\calC} W_\calC(x + (q-1)y, x-y).
  \]
\end{thm*}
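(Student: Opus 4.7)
The plan is to prove MacWilliams' identity by discrete Fourier analysis on the finite abelian group $\F_q^n$. Fix a non-trivial additive character $\chi: \F_q \to \bbC^*$ (e.g.\ $\chi(t)=\exp(2\pi i\,\Tr_{\F_q/\F_p}(t)/p)$, where $p$ is the characteristic of $\F_q$), and for any function $f: \F_q^n \to \bbC[x,y]$ define its Fourier transform
$$\hat f(a) \;=\; \sum_{v \in \F_q^n} f(v)\,\chi(\langle a, v\rangle).$$
Character orthogonality combined with $(\calC^\perp)^\perp=\calC$ gives $\sum_{a \in \calC^\perp} \chi(\langle a, v\rangle)=|\calC^\perp|\cdot\mathbf{1}_{\calC}(v)$ (the sum equals $|\calC^\perp|$ on $\calC$ because $\langle a,v\rangle\equiv 0$ there, and vanishes off $\calC$ because $a\mapsto\chi(\langle a,v\rangle)$ is then a non-trivial character of $\calC^\perp$). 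Swapping the order of summation then produces the Poisson-type identity
$$\sum_{v \in \calC^\perp} f(v) \;=\; \frac{1}{|\calC|} \sum_{c \in \calC} \hat f(c).$$

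Next I would apply this identity to the tensor-product test function $f(v)=\prod_{i=1}^n \phi(v_i)$ with $\phi(0)=x$ and $\phi(t)=y$ for $t \ne 0$. By construction, the left-hand side is exactly $W_{\calC^\perp}(x,y)$. The Fourier transform splits as $\hat f(a)=\prod_i \hat\phi(a_i)$, reducing everything to a one-variable character sum: one checks $\hat\phi(0)=x+(q-1)y$, and using $\sum_{t\in\F_q^*}\chi(bt)=-1$ for $b\neq 0$ one gets $\hat\phi(b)=x-y$. Hence $\hat f(c)=(x+(q-1)y)^{n-\weight(c)}(x-y)^{\weight(c)}$, and summing over $c\in\calC$ gives exactly $\frac{1}{|\calC|}\,W_\calC(x+(q-1)y,\,x-y)$, as required.

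The substance of the argument is the Poisson duality step; once the right test function is chosen, everything else is bookkeeping. The only real subtlety is ensuring character orthogonality works uniformly for all prime powers $q$, which is handled by defining $\chi$ via the trace to the prime subfield so that it is a non-trivial additive character on every line through the origin.
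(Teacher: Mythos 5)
The paper does not prove this statement at all: it quotes MacWilliams' identity as a classical theorem and cites \cite{MS} for the proof, so there is no internal argument to compare against. Your character-sum proof is correct and complete, and it is essentially the standard proof found in the cited reference: the orthogonality relation $\sum_{c\in\calC}\chi(\langle c,v\rangle)=\#\calC\cdot\mathbf{1}_{\calC^\perp}(v)$ (the mirror image of the version you wrote down, obtained by swapping the roles of $\calC$ and $\calC^\perp$) yields the Poisson identity, and the computation $\hat\phi(0)=x+(q-1)y$, $\hat\phi(b)=x-y$ for $b\neq 0$ finishes the job. One small caveat on your closing remark: the trace character is \emph{not} non-trivial on every $\F_p$-line through the origin (it vanishes on lines inside the kernel of the trace); what the proof actually needs, and what does hold, is that $t\mapsto\chi(bt)$ is a non-trivial character of $(\F_q,+)$ for every $b\neq 0$, since $b\F_q=\F_q$ and $\chi$ is non-trivial somewhere.
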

A linear code is \emph{self-dual} if $\calC^\perp = \calC$.
In that case we have $\#\calC = \#\calC^\perp = q^{n/2}$, and so MacWilliams' Theorem implies that
\[ S_q:=q^{-{1/2}}\left[
\begin{array}{cc}
  1 & q-1 \\
  1 & -1
\end{array}\right]\in \Stab_{\GL_2(\bbC)}(W_\calC(x,y)) \qquad (\star).
\]
A linear code which satisfy ($\star$) is called \emph{formally self
dual}.
\end{enumerate}

If $G\leq \GL_2(\bbC)$ is a group of matrices, the ring
\[\bbC[x,y]^G:=\{p(x,y)\in \bbC[x,y] \ | \ A\cdot p(x,y)=p(x,y) \ \text{for all }A\in G\}\]
is called the \emph{invariant ring} of $G$.

A beautiful result by Gleason \cite{Gl} states that the weight
enumerator of a self-dual binary linear code which is
\emph{doubly-even}, i.e. is divisible by $4$, lies in the ring
$$\bbC[x^8+14x^4y^4+y^8,x^4y^4(x^4-y^4)^4],$$
which is the invariant ring of $G:=\langle D_4,S_2\rangle$. This
result has a lot of consequences on the family $\mathcal{F}$ of
self-dual doubly-even binary linear codes. Among the others, the
length of a code in $\mathcal{F}$ is always divisible by $8$ and one
can derive upper bounds on the minimum non-zero weight of such
codes, depending only of the length.

As we said in the introduction, inspired by this result, we look for
invariances of weight enumerators of Reed-Muller codes. Let us
briefly introduce this family of codes.

The \emph{Reed-Muller code} $\mathcal{RM}_q(r,m)$ on $m$ variables,
of degree $r$ and defined over $\F_q$ is the code
\[
\mathcal{RM}_q(r,m):=\left\{(p(v))_{v\in \F_q^m}\ | \ p\in \F_q[x_1,\ldots,x_m]_r\right\},
\]
where $\F_q[x_1,\ldots,x_m]_r$ is the ring of polynomials in $m$
variables, with coefficient in $\F_q$ and of degree at most $r$.
Clearly, $\mathcal{RM}_q(r,m)$ encodes all the hypersurfaces in
$\mathbb{A}^m(\F_q)$ of degree at most $r$. So, determining the
weight enumerator of such a code is equivalent to counting
$\F_q$-rational points of hypersurfaces in the affine space.

Similarly, we can define the \emph{projective Reed-Muller code}
$\mathcal{PRM}_q(r,m)$ on $m$ variables, of degree $r$ and defined
over $\F_q$ in the following way:
$$\mathcal{PRM}_q(r,m):=\{(p(v))_{v\in R}\ | \ p\in \F_q[x_0,\ldots,x_m]^h_r\} \cup \{0\} ,$$
where $\F_q[x_1,\ldots,x_m]^h_r$ is the ring of degree $r$
homogeneous polynomials in $m+1$ variables, with coefficient in
$\F_q$, and $R$ is a set of representatives of all the points of
$\mathbb{P}^m(\F_q)$.
Observe that changing the set of representatives $R$ gives rise to an equivalent code.

Both Reed Muller codes and projective Reed Muller codes are
divisible codes, as a consequence of a theorem by Ax \cite{A}. In
general, they are not self-dual codes.


\section{Weight enumerators with at most two distinct complex
roots}\label{sec-roots}

One of the main problem in Coding Theory is to determine a code with
a prescribed weight enumerator. The problem in general is extremely
difficult, since, as remarked in \S \ref{section-intro}, codes are
geometric objects while weight enumerators are combinatorial
objects. In this section we will consider very particular weight
enumerators, which are important for our purposes, that is
polynomials with at most two distinct complex roots. Note that all
roots of weight enumerators are algebraic integers, since every
weight enumerator is a monic polynomial with integer coefficients.
We will give an almost complete classification of codes with such a
weight
enumerator.

We start with a quite technical lemma which is fundamental for our
classification.
\begin{lem} \label{lem:oddweight}
  Let $\calC$ be a code over $\F_q$ with $q \neq 2$. Assume all codewords of $\calC$ have even weight.
  Let $c \in \calC$ be a codeword of weight two, and $x \in \calC$ an arbitrary codeword.
  Let $\supp(c)=\{i,j\}$. Then there exists $\lambda\in \F_q$ such that $(x_i, x_j)=(\lambda c_i, \lambda c_j)$.
\end{lem}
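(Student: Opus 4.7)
The plan is to parameterise the codewords $x+\mu c$ for $\mu \in \F_q$ and extract the conclusion from a parity count on their weights.

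Since $c$ is supported on $\{i,j\}$, the coordinates of $y_\mu := x+\mu c$ agree with those of $x$ outside $\{i,j\}$. Writing $w$ for the number of non-zero coordinates of $x$ outside $\{i,j\}$, I would therefore express
\[
\weight(y_\mu) = w + [\,x_i+\mu c_i \neq 0\,] + [\,x_j+\mu c_j\neq 0\,],
\]
where $[\,\cdot\,]$ denotes the Iverson bracket. Since $c_i,c_j\in\F_q^\times$, the conditions $x_i+\mu c_i=0$ and $x_j+\mu c_j=0$ each cut out a single value of $\mu$; set $a:=-x_i/c_i$ and $b:=-x_j/c_j$, so that
\[
\weight(y_\mu) = w + [\mu\neq a] + [\mu\neq b].
\]

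The key step is then to use the hypothesis that $y_\mu\in\calC$ for every $\mu$, hence $\weight(y_\mu)$ is even for every $\mu$. This forces the quantity $[\mu\neq a]+[\mu\neq b]$ to have constant parity as a function of $\mu$. But that expression equals $2$ whenever $\mu\notin\{a,b\}$ and equals $1$ whenever $\mu\in\{a,b\}$ and $a\ne b$ (it equals $0$ when $\mu=a=b$). So if $a\neq b$, then the value $1$ is attained at $\mu\in\{a,b\}$ and the value $2$ is attained at any $\mu\in\F_q\setminus\{a,b\}$, which exists precisely because $q\geq 3$. This contradicts the constant-parity conclusion.

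Therefore $a=b$, i.e.\ $x_i/c_i=x_j/c_j$, and we may take $\lambda$ to be this common value. The only subtlety worth highlighting is the role of the assumption $q\neq 2$: it is exactly what guarantees a witness $\mu\in\F_q\setminus\{a,b\}$ when $a\neq b$, and without it the statement fails (for instance over $\F_2$ any even-weight code gives counterexamples, since every codeword whose $\{i,j\}$-restriction is $(1,0)$ or $(0,1)$ is allowed). Apart from this, the argument is purely a counting of weights and no further obstacle is expected.
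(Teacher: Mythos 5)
Your proof is correct and rests on the same idea as the paper's: both arguments perturb $x$ by scalar multiples $\mu c$ and derive a contradiction from the even-weight hypothesis, using $q>2$ to guarantee a suitable choice of $\mu$. Your uniform parity count over all $\mu\in\F_q$ merely repackages the paper's two-case analysis (on whether $x_j=0$ or not) into a single cleaner statement, so no genuinely new route is involved.
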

\begin{proof}
 Suppose that $(x_i, x_j)\neq (\lambda c_i, \lambda c_j)$ for every
$\lambda\in
  \F_q$. In particular, $(x_i, x_j) \neq (0,0)$. Without lost of generality,
  we assume that $x_i \neq 0$.
  If $x_j = 0$, then there exists $\mu\in \F_q^\times$ such that
  $$x_i+\mu c_i\neq 0 \qquad \text{and} \qquad \mu c_j = x_j+\mu c_j\neq 0.$$
  Indeed, it suffices to take any $\mu\in \F_q^\times \setminus\{
  -x_ic_i^{-1}\}$, that is non-empty because $q>2$. Then
$$\supp(x + \mu c)=\supp(x) \ \cup \ \{j\}$$
  so that $x + \mu c$ has odd weight, which gives a contradiction.
  If $x_j \neq 0$, then
  $$\supp(x - (x_i c_i^{-1}) c)=\supp(x) \setminus \{i\},$$
so that $x - (x_i c_i^{-1}) c$ has odd weight, which gives again a
contradiction.
\end{proof}

An immediate consequence is the following.
\begin{cor}\label{cor:disj}
Let $\calC$ be a code over $\F_q$ with $q \neq 2$. Assume all
codewords of $\calC$ have even weight. Let $c_1,\ldots,c_r\in \calC$
of weight $2$ such that $c_i$ is not in $\langle c_j\rangle_{\F_q}$
for any $i\neq j$. Then
$$\supp(c_i)\cap \supp(c_j)=\emptyset,$$
for every $i\neq j$.
\end{cor}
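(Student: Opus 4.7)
The plan is to proceed by contradiction, fixing two weight-2 codewords $c_i, c_j$ with $c_i \notin \langle c_j\rangle_{\F_q}$ and $\supp(c_i)\cap \supp(c_j)\ne \emptyset$, and then applying Lemma \ref{lem:oddweight} to reduce the overlap hypothesis to the statement that $c_i$ is a scalar multiple of $c_j$.

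More concretely, write $\supp(c_j)=\{k,\ell\}$, so that $(c_j)_k$ and $(c_j)_\ell$ are both nonzero. By the lemma applied with $c=c_j$ and $x=c_i$, there exists $\lambda\in\F_q$ with $((c_i)_k,(c_i)_\ell)=(\lambda (c_j)_k,\lambda (c_j)_\ell)$. The overlap hypothesis forces at least one of $(c_i)_k, (c_i)_\ell$ to be nonzero (this is where I use $\supp(c_i)\cap \supp(c_j)\ne\emptyset$), and hence $\lambda\ne 0$; combined with the fact that $(c_j)_k,(c_j)_\ell$ are both nonzero, this shows that $\{k,\ell\}\subseteq \supp(c_i)$. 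Since $c_i$ has weight $2$, this inclusion is an equality, so $\supp(c_i)=\supp(c_j)=\{k,\ell\}$.

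On the complementary coordinates both codewords vanish, and on $\{k,\ell\}$ we have $c_i = \lambda c_j$. Therefore $c_i=\lambda c_j$ as vectors, contradicting the assumption $c_i\notin \langle c_j\rangle_{\F_q}$. There is no serious obstacle here: the only subtle point is to verify that $\lambda\ne 0$ before concluding that $\supp(c_j)\subseteq \supp(c_i)$, which is exactly what the overlap hypothesis provides.
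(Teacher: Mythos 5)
Your argument is correct and is exactly the intended derivation: the paper states this corollary with no written proof, calling it ``an immediate consequence'' of Lemma \ref{lem:oddweight}, and your application of that lemma with $c=c_j$, $x=c_i$ (using the overlap to force $\lambda\ne 0$ and hence $c_i=\lambda c_j$) is the natural way to fill in that one-line claim.
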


So, we can get the first classification result.
\begin{lem} \label{lem:2roots}
  Let $\calC$ be a linear code of even length $n$ over $\F_q$ with $q \neq 2$.
  Suppose that
  \[
  W_\calC(x)=(x^2 + a)^{n/2}, \quad a \in \bbR\setminus\{0\}.
  \]
  Then $a = q-1$ and
  \[
  \calC \cong \bigoplus_{i=1}^{n/2}\langle(1,1)\rangle_{\F_q}.
  \]
\end{lem}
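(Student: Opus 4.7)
\noindent\emph{Proof plan.} The plan is to extract a codeword of weight $2$, use Lemma \ref{lem:oddweight} to split $\calC$ as a direct sum on two coordinates, compute the resulting factorisation of $W_\calC$ to pin down $a$, and then induct on $n$.

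Expanding $(x^2+a)^{n/2}$ shows that only even weights occur in $\calC$ and that the number of weight-$2$ codewords equals $(n/2)a$; since this must be a non-negative integer and $a\ne 0$, necessarily $a>0$ and some $c\in\calC$ has $\weight(c)=2$. Write $\supp(c)=\{i,j\}$. As $q\ne 2$ and all codewords have even weight, Lemma \ref{lem:oddweight} applies: for every $x\in\calC$ there exists $\lambda\in\F_q$ with $(x_i,x_j)=\lambda(c_i,c_j)$. Setting $\calC':=\{y\in\calC:y_i=y_j=0\}$, this immediately yields the internal direct sum $\calC=\calC'\oplus\langle c\rangle$, and I view $\calC'$ as a linear code of length $n-2$ on the remaining coordinates.

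Every element of $\calC$ then writes uniquely as $y+\lambda c$ with $y\in\calC'$ and $\lambda\in\F_q$, of weight $\weight(y)$ if $\lambda=0$ and $\weight(y)+2$ otherwise, since the supports are disjoint. Summing over $\lambda$ yields
\[
W_\calC(x)=\bigl(x^2+(q-1)\bigr)\,W_{\calC'}(x).
\]
Since $a>0$ makes $x^2+a$ irreducible in $\bbR[x]$, comparison with $(x^2+a)^{n/2}$ forces $a=q-1$ and $W_{\calC'}(x)=(x^2+(q-1))^{(n-2)/2}$.

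I would then close by induction on $n$. The case $n=2$ is immediate, as a length-$2$ linear code over $\F_q$ whose non-zero codewords all have weight $2$ must be a $1$-dimensional subspace with both entries non-zero, equivalent to $\langle(1,1)\rangle_{\F_q}$. For $n>2$, the code $\calC'$ satisfies the hypotheses at length $n-2$, so by induction $\calC'\cong\bigoplus_{i=1}^{(n-2)/2}\langle(1,1)\rangle_{\F_q}$; a monomial rescaling sending $c$ to $(1,1)$ on coordinates $i,j$ then gives the claimed decomposition of $\calC$. No single step is really hard; the main insight is to observe that Lemma \ref{lem:oddweight} already forces the splitting $\calC=\calC'\oplus\langle c\rangle$, after which the weight enumerator factorisation and the irreducibility of $x^2+a$ do all the work.
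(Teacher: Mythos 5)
Your proof is correct and follows essentially the same strategy as the paper: use Lemma \ref{lem:oddweight} to split off the weight-two codewords as a direct summand, factor the weight enumerator to force $a=q-1$, and induct. The only difference is granularity — you peel off a single weight-two codeword per step (avoiding Corollary \ref{cor:disj}), whereas the paper removes all of them at once via the subcode $\calC_S$; both arguments are sound.
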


\begin{proof}
  If $n=2$, then clearly $\calC=\langle(1,1)\rangle_{\F_q}$.
  Write $W_\calC(x)= \sum_{i=0}^n a_i x^i$, so that $a_i$ is the number of codewords of weight $n-i$.
  Expanding the above expression, we see that $\calC$ has no codewords of odd weight.
  Moreover, the number of codewords of length $2$ is $a_{n-2} = an/2 \neq 0$.
  Let $r:= a_{n-2}/(q-1)$ and let $c_1,\ldots,c_r$ be a set of
  codewords of weight $2$ such that $c_i$ is not in $\langle c_j\rangle_{\F_q}$
  for any $i\neq j$. They have disjoint supports by
  Corollary \ref{cor:disj}.

  Let $S := \bigcup_i\supp c_i$ and let $\calC_S:= \langle c_1, \dotsc, c_r\rangle_{\F_q}$.
  Every codeword $x \in \calC$ can be written as a sum
  $$x = y + z,$$
  with $\supp(y) \subset S$ and $\supp (z) \cap S = \emptyset$. By
  Lemma \ref{lem:oddweight}, $y\in \calC_S\subset \calC$, so that $z\in
  \calC$.
  Consequently, $\calC$ is the direct sum
  \[
  \calC = \calC_S \oplus \calC_{S^c},
  \]
  where $\calC_{S^c}=\{c\in \calC \ | \ \supp(c)\cap S
  =\emptyset\}$.
  This implies that $W_\calC(x) = W_{\calC_S}(x) \cdot
  W_{\calC_{S^c}}(x)$.

  Now observe that $\calC_S$ is monomially equivalent to the code $\bigoplus_{i=1}^r \langle (1,1) 
  \rangle_{\F_q}$,
  and hence its weight enumerator is $W_{\calC_S}(x) = (x^{2} + (q-1))^r$.
  Therefore, we must have $a = (q-1)$. By induction, $\calC_{S^c}\cong \bigoplus_{i=1}^{n/2-r}
  \langle(1,1)\rangle_{\F_q}$ so that
  $\calC \cong \bigoplus_{i=1}^{n/2} \langle ( 1,1) \rangle_{\F_q}$, as desired.
\end{proof}

Let us now prove the classification theorem for codes whose weight
enumerator has at most two distinct roots in $\overline{\bbZ}$ (the
set of all algebraic integers).
\begin{thm}\label{thm:2roots}
Let $\calC \subset \F_q^n$ be a linear code with weight enumerator
$W_\calC(x) \in \bbZ[x]$ having at most two distinct roots in
$\overline{\bbZ}$. Then only the following possibilities may hold:
\begin{itemize}
\item[(a)] $W_\calC(x)=x^n$ and $\calC=\{\underline{0}\}$;
\item[(b)] $W_\calC(x)=(x+(q-1))^n$ and $\calC= \F_q^n$;
\item[(c)] $n$ is even, $W_\calC(x)=(x^2+(q-1))^{n/2}$ and, if $q\ne 2$,
$\calC\cong \bigoplus_{i=1}^{n/2}\langle(1,1)\rangle$.
\end{itemize}
\end{thm}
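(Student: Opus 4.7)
The plan is to split by the nature of the (at most) two distinct roots of $W_\calC(x)\in\bbZ[x]$. Since $W_\calC$ is monic, its roots in $\overline{\bbZ}$ are algebraic integers, and having at most two distinct ones means they are either both rational (hence integers) or form a Galois-conjugate pair in some quadratic extension of $\bbQ$. Three scenarios arise: (I) a single integer root $\alpha$; (II) two distinct integer roots $\alpha\neq\beta$; (III) a conjugate pair $\alpha,\bar\alpha$ of non-real algebraic integers. Scenarios (I) and (III) will yield cases (a)--(b) and case (c) respectively, while (II) will collapse into case (b) up to removal of always-zero coordinates.

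The main technical device is a \emph{peeling identity}: if $\F_q e_i\subset\calC$ for some coordinate $i$, then a routine summation over the cosets of $\F_q e_i$ in $\calC$ gives
\[
W_\calC(x,y)=(x+(q-1)y)\,W_{\pi_i(\calC)}(x,y),
\]
where $\pi_i(\calC)\subset\F_q^{n-1}$ is the projection deleting the $i$-th coordinate. The form I will invoke repeatedly is the contrapositive: if $-(q-1)$ is not a root of $W_\calC(x)$, then $\calC$ contains no codeword of weight $1$, so $a_{n-1}=0$.

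In scenario (I), $W_\calC=(x-\alpha)^n$; non-negativity of coefficients forces $\alpha\le 0$, and $\alpha=0$ gives case (a). Otherwise $\alpha=-s$ with $s\ge 1$, so $a_{n-1}=ns>0$ and the peeling identity forces $s=q-1$; then $\#\calC=q^n$ gives $\calC=\F_q^n$, i.e.\ case (b). In scenario (III), writing the roots as $u\pm iv$ with $v\neq 0$, the algebraic-integer condition makes $2u$ and $u^2+v^2$ integers, so $W_\calC=(x^2-2ux+u^2+v^2)^{n/2}$ (forcing $n$ even). The coefficient $a_{n-1}=-nu$ must be $\ge 0$, giving $u\le 0$; and $u<0$ would require $-(q-1)$ to be a non-real root of $W_\calC$, which is impossible. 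Hence $u=0$ and $W_\calC=(x^2+v^2)^{n/2}$, so Lemma~\ref{lem:2roots} finishes the argument when $q\ne 2$. For $q=2$ the constant term $a_0=v^n$ counts weight-$n$ binary codewords, forcing $v^n\in\{0,1\}$ and hence $v^2=1=q-1$.

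Scenario (II) is where I expect the bulk of the work. Non-negativity of coefficients again precludes positive real roots, so $\alpha,\beta\le 0$. If both were strictly negative then $a_{n-1}=-(a\alpha+b\beta)>0$ and the peeling identity would force $-(q-1)\in\{\alpha,\beta\}$; iterating the peeling one coordinate at a time strips copies of $(x+(q-1)y)$ while preserving the set of roots, and exhausting that factor lands us on a single-root enumerator $(x-\beta)^m$ with $\beta\notin\{0,-(q-1)\}$, contradicting scenario (I). Hence one of the two roots must be $0$; peeling then identifies the other as $-(q-1)$, and the resulting $\calC$ is (up to permutation of coordinates) the direct sum of $\F_q^b$ with $a$ always-zero coordinates. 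Read modulo the removal of such redundant coordinates, this folds back into case (b) and the classification is complete.
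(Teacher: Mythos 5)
Your proposal is correct and, despite the different bookkeeping, it is in substance the paper's own argument: both proofs rest on (i) non-negativity of the coefficients of $W_\calC$ to exclude positive real roots, (ii) the observation that a weight-one codeword forces $\F_q e_i\subset\calC$ and hence splits off a factor $(x+(q-1)y)$ of the enumerator (you peel one coordinate at a time, the paper splits off $\F_q^m\oplus\calC_2$ in one step --- it is the same identity), (iii) Lemma~\ref{lem:2roots} for the purely imaginary conjugate pair when $q\neq 2$, and (iv) the constant-term count of full-support codewords when $q=2$. Your case split is by the arithmetic type of the roots rather than by the existence of weight-one words, but every individual step matches a step of the paper.

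The one genuine divergence is your scenario (II), and there you have caught something real. The code $\F_q^{n-r}\times\{0\}^{r}$ with $0<r<n$ has weight enumerator $x^{r}(x+(q-1))^{n-r}$, which has exactly two distinct roots in $\overline{\bbZ}$ and matches none of (a), (b), (c); so the statement as written needs either a non-degeneracy hypothesis (no identically zero coordinate) or a fourth case. The paper's proof slides past this at the line ``either $b=0$ and $r=n$'': from $m\geq r$ one can only conclude that $b=0$ forces $m=r$, not that $r=n$, and the subcase $b=0$, $r<n$ is exactly the degenerate code above. Your resolution --- identify the code as $\F_q^{n-r}$ padded with zero coordinates and ``fold it back into case (b)'' --- is an honest repair, but you should say explicitly that you are amending the statement rather than proving it verbatim; as literally phrased, the theorem is contradicted by $\langle(1,0)\rangle\subset\F_3^2$, whose enumerator is $x(x+2)$. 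Apart from flagging that, your argument is complete: the peeling iteration in scenario (II) closes because the second-highest coefficient of $W_\calC/(x+(q-1))^{k}$ remains positive as long as the remaining roots are strictly negative, so the induction terminates only by forcing the second root to equal $-(q-1)$ or $0$.
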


\begin{proof}
  Let $-a$, $-b$ be the roots of $W_\calC(x)$ in $\overline{\bbZ}$, so that
  $W_\calC(x) = (x + a)^r (x+ b)^{n-r}$ for $r \in \bbN$.
  The number of codewords in $\calC$ of weight one is then $c := ra +
  (n-r)b$.

  First, assume that $c \neq 0$ and let $m = c/(q-1)$. Taking arbitrary linear combinations of the codewords of weight one gives a copy of
  $\F_q^m$ in $\calC$. Therefore, $\calC = \calC_1 \oplus \calC_2$,
  with $\calC_1=\F_q^m$ and $\calC_2:=\{c\in \calC \ | \ \supp(c)\cap \supp(d)=\emptyset \ \forall d\in
  \calC_1\}$. Hence
  \[
  W_\calC(x) = (x + (q-1))^m \cdot W_{\calC_2}(x).
  \]
  Consequently, $-(q-1)$ is a root of $f$ and so either $a$ or $b$ is equal to $(q-1)$.
  Assume $a=(q-1)$ without loss of generality. We get
  \[
  m = \frac{ra + (n-r)b}{q-1} = r + \frac{(n-r)b}{q-1} \geq r.
  \]
  Hence, either $b=0$ and $r=n$, or $(x + (q-1))$ divides $(x+b)^{n-r}$,
  which implies $b=q-1$. Both cases give that
  $W_\calC(x) = (x + (q-1))^n$.
  But this implies $\#\calC = W_\calC(1) = q^n = \#\F_q^n$, whence $\calC = \F_q^n$, as
  desired.

  Now assume $\calC$ has no codewords of weight one, i.e. $c = ra+ (n-r)b =
  0$.
  If $a$ is real then so is $b$, and both must be non-negative: indeed, since $W_\calC(x)$ is non-zero and has
  positive coefficients, $W_\calC(y)>0$ for any positive real $y$, so $W_\calC$ has only non-positive roots.
  Since $ra = -(n-r)b$, we must have $a = b = 0$ whence $W_\calC$ has one root and $\calC =
  \{\underline{0}\}$.

  If $a$ is non-real, then $a$ $b$ are complex conjugate algebraic integers, and we must have
  $r = s$, which is possible only if $n$ is even.
  Consequently,
  \[
  W_\calC(x) = (x^2 + \Tr(a)x + \N(a))^{n/2},
  \]
  where $\Tr(a) = a + \bar a$ and $\N(a) = a\bar a$.
  The fact that $\calC$ has no codeword of weight one implies that $\Tr(a) = 0$, and hence,
  \[
  W_\calC(X) = (x^2 + \N(a))^{n/2}.
  \]
  
  If $q \neq 2$, Lemma \ref{lem:2roots} gives the desired conclusion about
  $\calC$.
  If $q = 2$, then we must show that $\N(a) = 1$. Since $a$ is an
  algebraic integer, $\N(a)\in \bbZ$. Since $q=2$, the number of
  codewords of weight $n$ is
  $$N(a)^{n/2}=1$$
  so that $N(a)=\pm 1$. If $N(a)=-1$ we have negative coefficients
  in $W_\calC(X)$, which is not possible.
\end{proof}

Note that Theorem \ref{thm:2roots} almost classify, up to monomial
equivalence, all linear codes with weight enumerator with at most
two distinct roots in $\overline{\bbZ}$. The case $q=2$ is left
unsolved and it seems quite difficult to be settled. If $q=2$, the
sum of two codewords of weight $2$ cannot have weight $3$, so that
the argument in the proof of Lemma \ref{lem:2roots} does not work.
\begin{que}\label{que:class}
Is it possible to classify all the binary codes of length $n$ with
weight enumerator $(x^2+1)^{n/2}$?
\end{que}

Let $\calC$ and $\calC'$ two codes with weight enumerator
$(x^2+1)^{n/2}$ and $(x^2+1)^{n'/2}$ respectively. Then
$\calC\oplus\calC'$ has weight enumerator $(x^2+1)^{(n+n')/2}$.
Hence, if we denote $$\mathcal{M}:=\{\text{binary codes of length} \
n \ \text{and weight enumerator} \ (x^2+1)^{n/2} \ | \ n\in
2\bbN\},$$ we have that $(\mathcal{M},\oplus)$ is a semigroup: in
order to answer positively to Question \ref{que:class} it suffices
to find all irreducible elements in $\mathcal{M}$, which means to
find a minimal set of generators of $(\mathcal{M},\oplus)$. We
consider, as usual, elements in $\mathcal{M}$ as classes of codes up
to equivalence.

Clearly, the generator with minimum length is the $[2,1,2]$ code
$\mathcal{X}_1:=\langle(1,1)\rangle$. Furthermore, every element in
$\mathcal{M}$ is formally self-dual and all formally self-dual codes
up to length $16$ are classified in \cite{BH01}. From an analysis of
the tables in the paper, we have that, up to length $16$, there are
exactly $4$ other irreducible elements of $\mathcal{M}$, namely the
formally self-dual (but not self-dual) $[6,3,2]$ code
$\mathcal{X}_2$ with generator matrix
$$\left[\begin{smallmatrix}
1 & 0 & 0 & 1 & 1 & 1 \\
0 & 1 & 0 & 1 & 1 & 1 \\
0 & 0 & 1 & 1 & 1 & 1
\end{smallmatrix}\right]$$
and three $[14,7,2]$ codes, which we call
$\mathcal{X}_3,\mathcal{X}_4$ and $\mathcal{X}_5$, with generator
matrices $[I|X_3]$,$[I|X_4]$ and $[I|X_5]$ respectively, where
$$X_3:=\left[\begin{smallmatrix}
1&1&1&0&0&0&0\\
1&1&1&0&0&0&0\\
1&1&1&0&0&0&0\\
1&0&0&1&1&1&1\\
1&0&0&1&1&1&1\\
1&0&0&1&1&1&1\\
1&0&0&0&0&0&0
\end{smallmatrix}\right], \quad
X_4:= \left[\begin{smallmatrix}
1&1&1&1&1&0&0\\
1&1&1&1&1&0&0\\
1&1&1&1&1&0&0\\
1&1&1&1&1&0&0\\
1&1&1&1&0&1&0\\
1&1&1&1&0&1&0\\
1&1&1&1&1&1&1
\end{smallmatrix}\right], \quad
X_5:= \left[\begin{smallmatrix}
1&0&1&0&1&0&0\\
1&0&1&0&1&0&0\\
1&0&1&0&1&0&0\\
1&0&1&0&1&0&0\\
1&1&1&0&1&0&1\\
1&1&1&0&1&0&1\\
1&1&1&1&1&1&1
\end{smallmatrix}\right]$$
and $I$ is the $7\times 7$ identity matrix. It is not clear how to
construct other generators and it seems already too complex for
software like {\sc Magma} \cite{Magma}. It is not evident if there
are infinitely many such generators or not.


We conclude this section showing a relation between our result and
the Gleason-Pierce Theorem (cf. \cite{Slo79} for a proof). Recall
that a code is divisible if there exists an integer $\Delta>1$ such
that the weight of every codeword of $\calC$ is divisible by
$\Delta$.

\begin{thm}[Gleason-Pierce]
Let $\calC$ be a formally self-dual code divisible by some $\Delta > 1$. Then
\begin{itemize}
\item $q=2$ and $\Delta\in \{2,4\}$,
\item $q=3$ and $\Delta=3$,
\item $q=4$ and $\Delta=2$,
\item or $q$ arbitrary, $\Delta=2$ and
$W_\calC(x)=(x^2+(q-1))^{n/2}$.
\end{itemize}
\end{thm}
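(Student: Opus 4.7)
The plan is to exploit the invariant-theoretic machinery of Section \ref{sec-back}. Formal self-duality and divisibility by $\Delta$ place both $S_q$ and $D_\Delta$ in $\Stab_{\GL_2(\bbC)}(W_\calC(x,y))$, so their product $A := S_q D_\Delta$ fixes the homogeneous weight enumerator. A direct calculation gives $\det A = -\zeta_\Delta$ and $\Tr A = q^{-1/2}(1 - \zeta_\Delta)$, so the eigenvalues $\lambda,\mu$ of $A$ satisfy $\lambda\mu = -\zeta_\Delta$ (a root of unity) and $\lambda + \mu = q^{-1/2}(1-\zeta_\Delta)$.

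The core dichotomy comes from diagonalizing $A$ and working in eigenbasis coordinates $(u,v)$. Writing $W_\calC = \sum_{i+j=n} c_{ij}\,u^i v^j$, invariance under $A$ forces any nonzero $c_{ij}$ to satisfy $\lambda^i \mu^{n-i} = 1$. If two distinct indices $i_1 \neq i_2$ occur, then $(\lambda/\mu)^{i_1-i_2} = 1$, and combined with $\lambda\mu$ being a root of unity this forces both $\lambda$ and $\mu$ to be roots of unity, so $A$ has finite order. Otherwise exactly one exponent $i_0$ appears, and using $\lambda\mu = -\zeta_\Delta$ one sees that $i_0 = n/2$ (otherwise $\mu$ itself would be a root of unity, collapsing into the finite-order branch). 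So we have two branches to analyze.

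For the single-monomial branch, $W_\calC = c\,(uv)^{n/2}$. Expanding $uv$ in the original coordinates gives a quadratic in $x,y$ whose middle coefficient is proportional to $1+\zeta_\Delta$. The separate $D_\Delta$-invariance of $W_\calC$ forces the coefficient of $x^{n-1}y$ in $(uv)^{n/2}$ to vanish, which in turn forces the $xy$-coefficient of $uv$ to vanish, yielding $\zeta_\Delta = -1$ and hence $\Delta = 2$. With $\Delta = 2$, the quadratic $uv$ collapses to a scalar multiple of $x^2 + (q-1)y^2$, and matching the leading coefficient of $W_\calC$ (which equals $1$, from the unique zero codeword) pins down $W_\calC(x,y) = (x^2 + (q-1)y^2)^{n/2}$, i.e.\ the last case of the theorem.

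For the finite-order branch, write $\lambda = e^{i\alpha}, \mu = e^{i\beta}$ with $\alpha, \beta \in \pi\bbQ$ and $\alpha + \beta \equiv \pi + 2\pi/\Delta \pmod{2\pi}$. Using $1 - \zeta_\Delta = -2i\sin(\pi/\Delta)\,e^{i\pi/\Delta}$, the identity $\lambda + \mu = q^{-1/2}(1-\zeta_\Delta)$ reduces to $\cos\bigl((\alpha-\beta)/2\bigr) = \pm \sin(\pi/\Delta)/\sqrt{q}$. Since the left-hand side is a cosine of a rational multiple of $\pi$, Niven's theorem (together with the finer fact that $2\cos(\pi\bbQ)$ is an algebraic integer whose Galois conjugates all lie in $[-2,2]$, yielding the short list $\pm\sqrt{2},\pm\sqrt{3},(\pm 1\pm\sqrt{5})/2$ of quadratic-irrational values) admits only finitely many $(q,\Delta)$; a small-case enumeration then isolates exactly $(q,\Delta) \in \{(2,2),(2,4),(3,3),(4,2)\}$. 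I expect this enumeration to be the main technical obstacle, in particular ruling out all $\Delta \geq 5$ via the quadratic-irrational classification; one also briefly verifies that $A$ is diagonalizable by checking that the repeated-eigenvalue locus $(1-\zeta_\Delta)^2 = -4q\zeta_\Delta$ admits no prime-power $q$.
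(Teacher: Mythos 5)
A preliminary remark: the paper does not prove this theorem at all --- it quotes it as a classical result and refers to \cite{Slo79} for a proof --- so there is no in-paper argument to compare yours against, and what follows assesses your proposal on its own. Your skeleton is the standard invariant-theoretic proof, and the steps you actually carry out are correct: $\det(S_qD_\Delta)=-\zeta_\Delta$ and $\Tr(S_qD_\Delta)=q^{-1/2}(1-\zeta_\Delta)$ check out; diagonalizability holds because $|1-\zeta_\Delta|^2\le 4<4q$; two distinct monomials in eigencoordinates do force both eigenvalues to be roots of unity; and in the single-monomial branch the product of the two eigenforms is, up to a scalar, $x^2-(1+\zeta_\Delta)xy-(q-1)\zeta_\Delta y^2$, so the vanishing of the weight-one coefficient gives $\zeta_\Delta=-1$, hence $\Delta=2$ and $W_\calC(x,y)=(x^2+(q-1)y^2)^{n/2}$, which is the fourth bullet.

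The genuine gap is the finite-order branch, i.e.\ precisely the content of the first three bullets, which you defer to ``a small-case enumeration''. As written this does not terminate: Niven's theorem together with the list of quadratic-irrational values of $2\cos(\pi\bbQ)$ only controls values of degree at most $2$ over $\bbQ$, whereas $2\sin(\pi/\Delta)/\sqrt{q}$ can have arbitrarily large degree, so nothing you state rules out, say, $\Delta=7$. The enumeration can be closed as follows. Squaring your identity shows that
\[
\frac{4\sin^2(\pi/\Delta)}{q}=\frac{(1-\zeta_\Delta)(1-\zeta_\Delta^{-1})}{q}=\bigl(2\cos(\pi s)\bigr)^2,\qquad s\in\bbQ,
\]
is a nonzero algebraic integer (note $\cos(\pi s)\ne 0$, since $\lambda=-\mu$ would force $\Tr(S_qD_\Delta)=0$, i.e.\ $\zeta_\Delta=1$) lying in $\bbQ(\zeta_\Delta)$; hence its norm to $\bbQ$, namely $\Phi_\Delta(1)^2/q^{\phi(\Delta)}$, is a positive rational integer. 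Since $\Phi_\Delta(1)=1$ unless $\Delta=p^k$ is a prime power, in which case $\Phi_\Delta(1)=p$, this forces $\Delta=p^k$, $q$ a power of $p$, and $q^{\phi(\Delta)}\mid p^2$, whence $\phi(\Delta)\le 2$; running through $\Delta\in\{2,3,4\}$ yields exactly $(q,\Delta)\in\{(2,2),(4,2),(3,3),(2,4)\}$, each of which does occur. With this norm computation inserted in place of the deferred enumeration, your proof is complete.
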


Hence, Theorem \ref{thm:2roots} implies the following.

\begin{cor}
For $q>4$, if $\calC$ is a formally self-dual divisible code of
length $2n$, then $\calC$ is equivalent to the direct sum of $n$
copies of $\langle(1,1)\rangle_{\F_q}$.
\end{cor}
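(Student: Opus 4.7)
The plan is to chain together the two nontrivial ingredients the paper has just put on the table: the Gleason--Pierce Theorem (which constrains the possible weight enumerators of formally self-dual divisible codes) and Theorem \ref{thm:2roots} (which extracts structural information from any code whose weight enumerator has at most two distinct complex roots). The proof should be essentially a two-line deduction, so the key is to make sure the hypotheses of each result are cleanly satisfied.

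First I would feed the hypothesis ``$\calC$ formally self-dual and divisible'' into the Gleason--Pierce Theorem. Reading its conclusion, the first three bullets are killed by the assumption $q > 4$, since each of them forces $q \in \{2,3,4\}$. Hence we are automatically in the fourth bullet, which tells us both that $\Delta = 2$ and that
\[
W_\calC(x) = (x^2 + (q-1))^{n},
\]
where I use that the length of $\calC$ is $2n$ (so the exponent $(\text{length})/2$ in the Gleason--Pierce statement becomes $n$).

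Next I would apply Theorem \ref{thm:2roots}. The polynomial $W_\calC(x) = (x^2+(q-1))^n$ has the two distinct complex roots $\pm i\sqrt{q-1}$ (distinct because $q-1 > 0$), both of which are algebraic integers. So $W_\calC$ has exactly two distinct roots in $\overline{\bbZ}$, and the theorem tells us that one of cases (a), (b), (c) must occur. Cases (a) and (b) have weight enumerators with a single real root and are therefore excluded. We land in case (c), and because $q > 4 \neq 2$ the theorem gives the structural conclusion
\[
\calC \cong \bigoplus_{i=1}^{n} \langle (1,1) \rangle_{\F_q},
\]
which is exactly what the corollary asserts.

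In this sense there is no real obstacle: both heavy results are already in place, and the only thing to verify carefully is that $q>4$ does indeed eliminate the first three bullets of Gleason--Pierce and that the resulting weight enumerator falls in case (c) (rather than (a) or (b)) of Theorem \ref{thm:2roots}. The restriction to $q > 4$ in the statement is essential only in order to invoke Theorem \ref{thm:2roots} with its full conclusion; for $q \in \{3,4\}$ the weight-enumerator identity would still hold but the classification of $\calC$ up to equivalence would not be guaranteed.
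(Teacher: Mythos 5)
Your proof is correct and is precisely the deduction the paper intends: the paper gives no explicit argument, merely stating ``Hence, Theorem \ref{thm:2roots} implies the following,'' and your chaining of Gleason--Pierce (fourth bullet, the others being excluded by $q>4$) with case (c) of Theorem \ref{thm:2roots} is that deduction spelled out. One small inaccuracy in your closing aside: for $q\in\{3,4\}$ the weight-enumerator identity would \emph{not} necessarily still hold, since those cases fall under the other bullets of Gleason--Pierce (e.g.\ $\Delta=3$ for $q=3$) and admit different weight enumerators --- but this does not affect the proof itself.
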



\section{Proof of Theorem \ref{thm-1}}\label{sec-proof1}

Let us start from a general lemma.

\begin{lem}\label{lemma-finite}
  Let $p(x,y) \in \bbC[x,y]$ be a homogeneous polynomial and let $n$ be the number of distinct complex roots of
  $p(x,1)$. If $n\geq 3$, then
  Then $$\#\Stab_{\GL_2(\bbC)}(p(x,y)) \leq n!\, \deg(p(x,1)).$$
  In particular, $\Stab_{\GL_2(\bbC)}(p(x,y))$ is finite if $n\geq 3$.
\end{lem}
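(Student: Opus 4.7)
The plan is to analyse $G := \Stab_{\GL_2(\bbC)}(p(x,y))$ via the canonical projection $\pi : \GL_2(\bbC) \to \PGL_2(\bbC)$, bounding the kernel of $\pi|_G$ and the image $\pi(G)$ independently, and then multiplying the two estimates.

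For the kernel, I would observe that since $p$ is homogeneous of degree $d := \deg p$, any scalar matrix $\lambda I$ acts on $p$ by $p \mapsto \lambda^d p$, so $\lambda I \in G$ if and only if $\lambda^d = 1$. This makes the kernel of $\pi|_G$ cyclic of order exactly $d$. In the typical case $y \nmid p$ (automatic for weight enumerators of codes containing $\underline{0}$, since then the coefficient of $x^n$ equals $1$) one has $d = \deg(p(x,1))$, matching the factor appearing in the bound.

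For the image I would exploit the action on projective roots. If $A \in G$, then the identity $A\cdot p = p$ forces the zero locus of $p$ in $\bbP^1(\bbC)$ to be preserved with multiplicities under the induced M\"obius transformation. Let $R \subset \bbP^1(\bbC)$ be the set of distinct projective roots of $p$; then $\pi(G)$ acts on $R$ by permutations. The decisive structural fact is that $\PGL_2(\bbC)$ acts \emph{sharply $3$-transitively} on $\bbP^1(\bbC)$: any M\"obius transformation is uniquely determined by its values on any three distinct points. Under the hypothesis $y \nmid p$ one has $|R| = n \geq 3$, so the restriction homomorphism $\pi(G) \to \mathrm{Sym}(R)$ is injective, yielding $|\pi(G)| \leq n!$.

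Combining the two bounds gives
\[
|G| \;=\; |\ker(\pi|_G)|\cdot |\pi(G)| \;\leq\; \deg(p(x,1)) \cdot n!,
\]
and in particular $G$ is finite. The only point requiring care is the case $y \mid p$, where $\infty$ is an additional projective root of $p$ so that $|R| = n+1$: the same argument still gives finiteness (with a weaker bound $(n+1)!\deg p$), and the sharper stated inequality can be recovered by exploiting that $\pi(G)$ must preserve the multiplicity strata of $R$, which constrains the orbit of $\infty$. This book-keeping between $\deg p$, $\deg(p(x,1))$, and $|R|$ is the only (minor) obstacle in an otherwise very clean argument built on sharp $3$-transitivity.
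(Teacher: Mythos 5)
Your proof is correct and takes essentially the same route as the paper's: factor the stabilizer through $\PGL_2(\bbC)$, use sharp $3$-transitivity to embed the image into the symmetric group on the $n$ distinct roots (giving the factor $n!$), and count the $\deg(p(x,1))$ scalars in each fibre. Your extra attention to the case $y \mid p$ is warranted, since the paper's proof silently assumes $y \nmid p$ (automatic for weight enumerators); just note that your proposed multiplicity-strata repair does not recover the stated bound when all $n+1$ projective roots have equal multiplicity --- for $p = xy(x^2-y^2)$ one has $n=3$ but the stabilizer has order $32 > 3!\cdot 3$ --- so in that degenerate case only the finiteness assertion survives.
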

\begin{proof}
  Let $V$ denote the projective variety defined by the vanishing of $p(x,y)$.
  Then $$V(\bbC) = \{(x:1) \in \bbP^1(\bbC) \: | \: f(x) = 0\},$$ so that $\#V(\bbC) \geq
  3$.
  Set $G := \Stab_{\GL_2(\bbC)}(p(x,y))$. Since every $A \in G$ fixes $p(x,y)$, $G$ acts on
  $V(\bbC)$, and since scalar matrices fix $V(\bbC)$ point-wise, this action induces an action of $\bar G \subset \PGL_2(\bbC)$ on $V(\bbC)$,
  where $\bar G$ is the image of $G$ in $\PGL_2(\bbC)$. It is
  well-known that $\PGL_2(\bbC)$ acts simply 3-transitively on
  $\bbP^1(\bbC)$. Since $\#V(\bbC) \geq 3$,
  every permutation of $V(\bbC)$ is realized by at most one $\bar A$ in $\PGL_2(\bbC)$, and hence by
  at most one $\bar A \in \bar G$.
  
  Such a $\bar A$ has exactly $\deg(p(x,1))$ pre-images in $G$: if $A\in G$ is such a pre-image,
  then all the pre-images in $\PGL_2(\bbC)$ are given by $\lambda A$ for 
  $\lambda\in \bbC\setminus\{0\}$.
  But $$(\lambda A)\cdot p(x,y) = \lambda^{\deg(p(x,1))}(A\cdot p(x,y))= \lambda^{\deg(p(x,1))}p(x,y),$$ so 
  that $\lambda A \in G$ if and only if $\lambda^{\deg(p(x,1))} =1$.
  There are $n!$ permutations of $V(\bbC)$, so at most $n!$ elements in $\bar
  G$.
  It follows that $\#G \leq n! \, \deg(p(x,1))$.
\end{proof}

Let us prove Theorem \ref{thm-1}, which we restate here for reader
convenience.
\begin{thm} The homogeneous weight enumerator of linear code has a finite number of
invariants if and only if its non-homogeneous version has at least
$3$ distinct roots in $\overline{\mathbb{Z}}$.
\end{thm}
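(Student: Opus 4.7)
The plan is to handle the two implications separately, exploiting the work already done in the paper.

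The implication ``$W_\calC(x)$ has at least $3$ distinct roots in $\overline{\bbZ}$ $\Rightarrow$ the stabilizer is finite'' is immediate from Lemma \ref{lemma-finite} applied to $p(x,y) = W_\calC(x,y)$, since the roots of $W_\calC(x)$ are algebraic integers (the polynomial is monic with integer coefficients). The bound $n! \deg W_\calC(x)$ is then finite.

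For the converse, I would argue by contraposition: assuming $W_\calC(x)$ has at most $2$ distinct roots in $\overline{\bbZ}$, I want to show $\Stab_{\GL_2(\bbC)}(W_\calC(x,y))$ is infinite. This is where Theorem \ref{thm:2roots} becomes the key input: it forces $W_\calC(x,y)$ (up to a harmless constant) to take exactly one of the three shapes $x^n$, $(x+(q-1)y)^n$, or $(x^2+(q-1)y^2)^{n/2}$. Since the stabilizer depends only on the polynomial $W_\calC(x,y)$, and not on the code itself, I do not need to worry about the fact that the $q=2$ subcase of (c) in Theorem \ref{thm:2roots} is not fully classified at the level of codes.

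I would then exhibit explicit $1$-parameter families of invariants in each case. For $x^n$, every matrix of the form $\bigl[\begin{smallmatrix}1&0\\c&1\end{smallmatrix}\bigr]$ with $c\in\bbC$ is an invariant. For $(x+(q-1)y)^n$, the stabilizer of the linear form $L=x+(q-1)y$ (up to $n$-th roots of unity) is a proper subgroup of a Borel subgroup of $\GL_2(\bbC)$, and one can write down an explicit two-parameter family, e.g.\ $A=\bigl[\begin{smallmatrix}1-(q-1)c & (q-1)-(q-1)d\\ c & d\end{smallmatrix}\bigr]$ with $A\cdot L=L$. For $(x^2+(q-1)y^2)^{n/2}$, the stabilizer of the binary quadratic form $Q=x^2+(q-1)y^2$ in $\GL_2(\bbC)$ is positive-dimensional: it is the orthogonal group $\mathrm{O}(Q)$, a conjugate of $\mathrm{O}_2(\bbC)$, which is infinite. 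In each case the family produced is infinite, completing the contrapositive.

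The only point that requires some care is the last case: one must verify that stabilizing the form $Q$ (up to scalars that are $n/2$-th roots of unity) still leaves an infinite group, not a finite one, but this follows because $\mathrm{O}(Q)$ is already infinite even before lifting scalars. In particular, no serious obstacle arises; the main conceptual work is already contained in Lemma \ref{lemma-finite} and Theorem \ref{thm:2roots}, and the present theorem is essentially a synthesis of these two.
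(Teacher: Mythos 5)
Your proposal is correct and follows essentially the same route as the paper: Lemma \ref{lemma-finite} for the forward direction and Theorem \ref{thm:2roots} for the contrapositive of the converse. The only difference is that you spell out explicit infinite families of invariants for the three shapes $x^n$, $(x+(q-1)y)^n$, $(x^2+(q-1)y^2)^{n/2}$ (all of which check out), where the paper simply asserts this step is easy to observe.
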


\begin{proof}
Lemma \ref{lemma-finite} gives immediately that if the weight
enumerator has at least $3$ distinct roots in
$\overline{\mathbb{Z}}$, then the stabilizer of its homogeneous
version is finite.
If the weight enumerator of a linear code $\calC$ has at most $2$
distinct roots, then Theorem \ref{thm:2roots} implies that there exist
$a,m\in \bbN\setminus\{0\}$ such that $W_{\calC}(x,y)$ is equal to
one of the following:
$$x^m, \quad (x+ay)^m,\quad (x^2+ay^2)^{m}.$$
It is easy to observe that in all three cases we have an infinite
stabilizer.
\end{proof}


\section{The algorithm}\label{sec-algo}

The proof of Lemma \ref{lemma-finite} gives an algorithm to find the
stabilizer of every weight enumerator.

Let $\calC$ be a linear code. Suppose that its homogeneous weight
enumerator $W_\calC(x,y)$ is known and of degree $n$.
\begin{itemize}
\item[1.] Set $G:=\emptyset$.
\item[2.] Calculate $V(\bbC) := \{z_1, \dotsc,z_n\}$ the set of roots of
$W_\calC(x,1)$.
\item[3.] Call $V(\bbC)_3$ the set of all ordered 3-subsets of
$V(\bbC)$; we have $\#V(\bbC)_3=\frac{1}{6}n^3 - \frac{1}{2}n^2 +
\frac{1}{3}n$.
\item[4.] For every triple $\{w_1,w_2,w_3\}\in V(\bbC)_3$:
\begin{itemize}
\item[4a.] solve the system $z_ia+b-w_iz_ic-w_i d=0$, $i\in \{1,2,3\}$, where the unknowns are $a,b,c,d$.
It has clearly infinitely many solutions depending of one complex
parameter $\lambda$ (the action of $\PGL_2(\bbC)$ is simply
3-transitive, as we said). Call
$\underline{a},\underline{b},\underline{c},\underline{d}$ one
solution.
\item[4b.] If $\{\frac{\underline{a}z_i+\underline{b}}{\underline{c}z_i+\underline{d}} \ | \ z_i\in
V(\bbC)\}=V(\bbC)$, then
\begin{itemize}
\item[4bi.] let
$A:=\left[\begin{smallmatrix}\underline{a}&\underline{b}\\\underline{c}&\underline{d}\end{smallmatrix}\right]$.
\item[4bii.] Calculate $\lambda:=\frac{W_\calC(\underline{b},\underline{d})}{W_\calC(0,1)}$.
\item[4biii.] Let $G:=G\cup\{\zeta_{n}\lambda^{1/n}A\ | \ \zeta_{n}\in \bbC \ \text{s.t.} \ \zeta_n^n=1\}$.
\end{itemize}
\end{itemize}
\end{itemize}

Then $G$ is equal to $\Stab_{\GL_2(\bbC)}(W_\calC(x,y))$.\\

This algorithm can be implemented easily in {\sc Magma}, but there
is a problem for Step 2.: in $\bbC$, we do not have access to the
exact roots but only to approximations.
Of course, one can consider the splitting field of $W_\calC(x,1)$
instead of $\bbC$, but this does not give explicit solutions. So,
whenever Step 2. is feasible explicitly, the algorithm gives the
exact form for the stabilizer. Otherwise, we get an approximate
version.

To show that some Reed-Muller codes have weight enumerator with
trivial stabilizer, one needs to control the error made by approximating
the roots over $\bbC$.
The following lemmas are useful for this.

Recall that the cross ratio of four points $(z_1: 1), \dotsc, (z_4 :
1)$ is defined as
\[
[z_1, z_2, z_3, z_4] := \frac{(z_1- z_3)(z_2 - z_4)}{(z_1 - z_4)(z_2
- z_3)}.
\]
Make the symmetric group $S_4$ acts on the cross ratios by permuting
the points, and observe that for any $\sigma \in V_4:= \{\id,
(12)(34), (13)(24), (14) (23)\}$, we have $[z_{\sigma(1)},
z_{\sigma(2)}, z_{\sigma(3)}, z_{\sigma(4)}] = [z_1, z_2, z_3,
z_4]$.

Let $\mathcal{Z}$ be a set of at least four complex points. A
4-tuple of distinct elements $$\zeta = (z_1, z_2, z_3, z_4) \in
\mathcal{Z}^4$$ is called \emph{critical} if for any 4-tuple of
distinct elements $(y_1, y_2, y_3, y_4) \in \mathcal{Z}^4$, we have
\[
[z_1, z_2, z_3, z_4] = [y_1, y_2, y_3, y_4]
\]
if and only if $(y_1, y_2, y_3, y_4)= \zeta^\sigma= (z_{\sigma(1)},
z_{\sigma(2)}, z_{\sigma(3)}, z_{\sigma(4)})$ for some $\sigma\in
V_4$.

The following lemma gives the crucial argument to show that the
stabilizer is trivial.

\begin{lem}\label{lem-trivial}
Let $p(x,y)\in \bbC[x,y]$ be a polynomial with $5$
roots $z_1, z_2, z_3, z_4, z_5$ of $p(x,1)$ such that both $(z_1,
z_2, z_3, z_4)$ and $(z_1, z_2, z_3, z_5)$ are critical. Then
$\Stab_{\GL_2(\bbC)}(p(x,y))$ is trivial.
\end{lem}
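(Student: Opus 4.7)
The plan is to run the argument of Lemma \ref{lemma-finite} but exploit the two criticality hypotheses to pin the image in $\PGL_2(\bbC)$ down to a single element. Set $G := \Stab_{\GL_2(\bbC)}(p(x,y))$ and let $\bar G \subset \PGL_2(\bbC)$ be its image. Since $\bar G$ fixes $V(\bbC) := \{z \in \bbC \: | \: p(z,1) = 0\}$ setwise, it acts by permutations on this set; and since every element of $\PGL_2(\bbC)$ preserves the cross ratio of any four distinct points on $\bbP^1(\bbC)$, this permutation action preserves cross ratios.

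Now fix any $\bar A \in \bar G$. Applying $\bar A$ to the 4-tuple $(z_1,z_2,z_3,z_4)$ produces a 4-tuple of distinct elements of $V(\bbC)$ with the same cross ratio. By the criticality of $(z_1,z_2,z_3,z_4)$ there exists $\sigma \in V_4$ such that $\bar A(z_i) = z_{\sigma(i)}$ for $i=1,2,3,4$; in particular $\bar A$ stabilizes the set $\{z_1,z_2,z_3,z_4\}$. The same reasoning applied to $(z_1,z_2,z_3,z_5)$ shows that $\bar A$ also stabilizes $\{z_1,z_2,z_3,z_5\}$.

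Taking the intersection, $\bar A$ stabilizes $\{z_1,z_2,z_3\}$, and therefore it must send $z_4$ and $z_5$ to the remaining unique elements of the two stabilized 4-sets, i.e.\ $\bar A(z_4)=z_4$ and $\bar A(z_5)=z_5$. In particular the permutation $\sigma \in V_4 = \{\id,(12)(34),(13)(24),(14)(23)\}$ satisfies $\sigma(4)=4$, which forces $\sigma = \id$. Hence $\bar A$ fixes $z_1,z_2,z_3,z_4,z_5$; but $\PGL_2(\bbC)$ acts sharply 3-transitively on $\bbP^1(\bbC)$, so any M\"obius transformation fixing three points is the identity. Thus $\bar A = \id$, $\bar G$ is trivial, and $G$ consists only of scalar matrices, which is the convention for a trivial stabilizer used in the paper.

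There is no real obstacle here, only bookkeeping: the crucial point is that criticality upgrades ``same cross ratio'' to ``same underlying 4-set'', which is what lets the two hypotheses interact via their intersection. Once that is observed, the combinatorics of $V_4$ and sharp 3-transitivity finish the job.
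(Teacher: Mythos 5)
Your proof is correct and follows essentially the same route as the paper's: criticality forces $\bar A$ to stabilize each 4-set $\{z_1,z_2,z_3,z_j\}$, whence $z_4$ (and $z_5$) is fixed, the $V_4$-constraint forces the identity permutation, and sharp 3-transitivity concludes. The only cosmetic difference is that you deduce $\bar A(z_4)=z_4$ by intersecting the two stabilized sets, while the paper argues by contradiction with injectivity; these are the same observation.
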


\begin{proof}
Every $\bar A$ in $\PGL_2(\bbC)$ fixes $\{z_1, z_2, z_3, z_j\}$, for
$j=4,5$, since it must preserve the cross ratio of these four
points.
  If $\bar A$ sends $z_4$ to $z_j$ for $j \in \{1,2,3\}$, then from the fact that $\bar A$ fixes
  $\{z_1, z_2, z_3, z_5\}$ if follows that some element of this set is also sent to $z_j$,
  contradicting the injectivity of $\bar A$.
  Thus $\bar A z_4 = z_4$.
  But the only permutation of $(z_1, z_2, z_3, z_4)$ which fixes the cross ratio and sends $z_4$ to
  $z_4$ is the identity.
  Thus $\bar A$ fixes four points of $\bbP^1(\bbC)$.
  Since the action is sharply 3-transitive, $\bar A = \id$, and the conclusion follows.
\end{proof}

In order to prove the triviality of the stabilizer, Lemma \ref{lem-trivial} implies that it suffices
to find 5 roots with the above property. This is feasible even with
approximated roots, provided that we can control the errors.

\begin{lem}\label{lem:tech}
  Let $z_1,z_2,z_3,z_4$ and $\delta_1,\delta_2,\delta_3,\delta_4$ be complex numbers.
  Let $M = \max_i|z_i|$ and $\delta = \max_i |\delta_i|$.
  Assume $\delta < 1$.
  Then
\[
  |(z_1 + \delta_1)(z_2 + \delta_2)(z_3 + \delta_3)(z_4 + \delta_4) - z_1z_2z_3z_4| \leq 15 M^3 \delta.
  \]
\end{lem}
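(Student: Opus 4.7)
The plan is a direct computation: expand the product by distributivity and apply the triangle inequality term by term. Writing
\[
(z_1+\delta_1)(z_2+\delta_2)(z_3+\delta_3)(z_4+\delta_4) = \sum_{S \subseteq \{1,2,3,4\}} \prod_{i \in S} \delta_i \prod_{j \notin S} z_j,
\]
we get $2^4 = 16$ terms indexed by subsets. The term $S = \emptyset$ equals $z_1 z_2 z_3 z_4$, which cancels the subtracted quantity, leaving a sum over the $15$ non-empty subsets $S$.

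Next, I would bound each remaining term by the triangle inequality. For $|S| = k \in \{1,2,3,4\}$, the corresponding term has modulus at most $\delta^k M^{4-k}$. Using the hypothesis $\delta < 1$, we have $\delta^k \leq \delta$ for every $k \geq 1$, and (in the intended setting where $M \geq 1$) $M^{4-k} \leq M^3$, so every individual term is bounded by $M^3 \delta$. Counting non-empty subsets by cardinality yields
\[
\binom{4}{1} + \binom{4}{2} + \binom{4}{3} + \binom{4}{4} = 4 + 6 + 4 + 1 = 15,
\]
which is precisely where the constant $15$ comes from. Summing gives the desired bound $|(z_1+\delta_1)(z_2+\delta_2)(z_3+\delta_3)(z_4+\delta_4) - z_1z_2z_3z_4| \leq 15 M^3 \delta$.

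There is no real obstacle here; the argument is essentially combinatorial bookkeeping plus one application of the triangle inequality. The only judgment call is presentational: one could equivalently telescope
\[
\prod_{i=1}^4(z_i + \delta_i) - \prod_{i=1}^4 z_i = \sum_{k=1}^4 \Bigl(\prod_{i<k} z_i\Bigr)\delta_k \Bigl(\prod_{i>k}(z_i+\delta_i)\Bigr),
\]
but this yields a larger constant than $15$, whereas the subset expansion gives the sharp constant stated in the lemma directly. I would therefore favor the explicit expansion over subsets.
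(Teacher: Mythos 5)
Your proof is correct and follows essentially the same route as the paper's: expand the product, cancel the $S=\emptyset$ term, and bound each of the remaining $15$ terms by $M^3\delta$ using $\delta<1$. You are also right to flag the hidden assumption $M\geq 1$ (needed so that $M^{4-k}\leq M^3$, e.g.\ for the pure term $\delta_1\delta_2\delta_3\delta_4$, which is not $\leq M^3\delta$ when $M<\delta$); the paper's own proof uses this silently.
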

\begin{proof}
  Expand the expression.
  The term $z_1z_2z_3z_4$ cancels out, so we are left with $15$ terms, all of which contain at least
  one of the $\delta_i$, and at most three $z_i$'s.
  Explicitly, all remaining terms are of the form
  \[
  \prod z_i \cdot \prod \delta_j
  \]
  where the first product ranges over at most 3 indices $i$, and the second ranges over at least
  1 index $j$.
  Since $\delta < 1$, all these terms are less than $M^3\delta$ in module, whence the conclusion.
\end{proof}

\begin{cor} \label{cor:approx}
  Let $x_1, \dotsc, x_8$ be 8 complex numbers, and let $\tilde x_1, \dotsc, \tilde x_8$ be
  approximations so that
  \[
  \max_{j} |x_j - \tilde x_j| \leq \epsilon < 1/2
  \]
  for a fixed $\epsilon$.
  Let $N := \max_j | x_j|$.
  Then:
  \begin{enumerate}
  \item \label{it:1} $\max_{i,j} |(x_i - x_j) - (\tilde x_i - \tilde x_j)| \leq 2 \epsilon$
  \item \label{it:2} $
    |(x_1 - x_3)(x_2 - x_4)(x_5 - x_8)(x_6 - x_7) -
    (\tilde x_1 - \tilde x_3)(\tilde x_2 - \tilde x_4)(\tilde x_5 - \tilde x_8)
    (\tilde x_6 - \tilde x_7) | \leq 60 N^3 \epsilon
    $.
  \item \label{it:3} Let $a = (x_1 - x_3)(x_2 - x_4)(x_5 - x_8)(x_6 - x_7)$,
    $b = (x_1 - x_4)(x_2 - x_3)(x_5 - x_7)(x_6 - x_8)$, and let $\tilde a$, $\tilde b$ denote the
    same expressions where the $x$'s have been replaced by their approximations.
    If $|\tilde a - \tilde b| > 120 N^3 \epsilon$, then $|a - b| > 0$.
  \end{enumerate}
\end{cor}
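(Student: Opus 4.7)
The three parts of the corollary are linked and build on one another, so my plan is to prove them in order, with everything coming down to careful applications of the triangle inequality combined with Lemma~\ref{lem:tech}.

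Part (\ref{it:1}) is immediate. I would write
\[
(x_i - x_j) - (\tilde x_i - \tilde x_j) = (x_i - \tilde x_i) - (x_j - \tilde x_j)
\]
and apply the triangle inequality together with the bound $|x_k - \tilde x_k| \leq \epsilon$ for each $k$, to conclude $|(x_i - x_j) - (\tilde x_i - \tilde x_j)| \leq 2\epsilon$.

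Part (\ref{it:2}) is the heart of the statement, and the plan is to reduce it to Lemma~\ref{lem:tech}. Set
\[
z_1 := x_1 - x_3, \ z_2 := x_2 - x_4, \ z_3 := x_5 - x_8, \ z_4 := x_6 - x_7,
\]
and let $\delta_i$ denote the difference between the approximation of $z_i$ (obtained by replacing each $x_k$ by $\tilde x_k$) and $z_i$ itself. Part (\ref{it:1}) then bounds $|\delta_i| \leq 2\epsilon$, and the assumption $|x_k| \leq N$ bounds $|z_i| \leq 2N$. Since $\epsilon < 1/2$ we have $\delta := \max_i |\delta_i| \leq 2\epsilon < 1$, so Lemma~\ref{lem:tech} applies with $M = 2N$ and gives the desired bound on $|a - \tilde a|$. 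The only care required here is bookkeeping the numerical constants so that the $15 M^3 \delta$ estimate of Lemma~\ref{lem:tech} matches the claimed $60 N^3 \epsilon$ bound.

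Part (\ref{it:3}) follows from part (\ref{it:2}) applied twice, once to $a$ and once to $b$: each of $|a - \tilde a|$ and $|b - \tilde b|$ is bounded by $60 N^3 \epsilon$. The reverse triangle inequality then gives
\[
|a - b| \geq |\tilde a - \tilde b| - |a - \tilde a| - |b - \tilde b| > 120 N^3 \epsilon - 60 N^3 \epsilon - 60 N^3 \epsilon = 0,
\]
exactly the conclusion sought. I do not anticipate any substantive obstacle in carrying out this plan; the whole argument is a chain of triangle inequalities with Lemma~\ref{lem:tech} as the single nontrivial input, and the only place to be careful is verifying that the factors of $2$ introduced by bounding $|z_i| \leq 2N$ and $|\delta_i| \leq 2\epsilon$ are correctly tracked through the application of Lemma~\ref{lem:tech}.
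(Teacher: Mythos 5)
Your proposal follows the paper's own proof essentially verbatim: part (\ref{it:1}) by the triangle inequality, part (\ref{it:2}) by feeding the differences $x_i-x_j$ and their approximation errors into Lemma~\ref{lem:tech} with $M\le 2N$ and $\delta\le 2\epsilon$, and part (\ref{it:3}) by the chain $|a-b|\ \ge\ |\tilde a-\tilde b|-|\tilde a-a|-|\tilde b-b|$ together with two applications of part (\ref{it:2}).

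The one step you defer --- ``bookkeeping the numerical constants so that the $15M^3\delta$ estimate matches the claimed $60N^3\epsilon$'' --- does not actually close. With $M\le 2N$ and $\delta\le 2\epsilon$, Lemma~\ref{lem:tech} yields $15\,(2N)^3(2\epsilon)=240\,N^3\epsilon$, not $60\,N^3\epsilon$; the stated constant would come from the slip $(2N)^3\le 2N^3$. The paper's own proof makes exactly the same slip, so you are in good company, but no amount of careful bookkeeping recovers $60$: already the four terms containing a single error factor contribute up to $4\cdot(2N)^3\cdot 2\epsilon=64\,N^3\epsilon$, and taking $x_1=x_2=x_5=x_6=N$, $x_3=x_4=x_7=x_8=-N$ with errors $\pm\epsilon$ of matching signs shows that part (\ref{it:2}) as stated is in fact false. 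The discrepancy is harmless for how the corollary is used --- replace $60$ by $240$ in part (\ref{it:2}) and $120$ by $480$ in part (\ref{it:3}) and everything downstream survives, since $\epsilon$ is ultimately taken arbitrarily small --- but your write-up should either prove the corrected constants or supply a sharper estimate than Lemma~\ref{lem:tech}. A smaller caveat inherited from that lemma: its bound $15M^3\delta$ tacitly needs $M\ge 1$ (here $N\ge 1/2$) so that the terms with two or more error factors are dominated by $M^3\delta$; this hypothesis should be carried along or checked in the application.
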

\begin{proof}
  Part (\ref{it:1}) is just the triangle inequality.
  Part (\ref{it:2}) follows from Lemma \ref{lem:tech} applied to the various differences $(x_i - x_j)$ in place of
  the
  $a$'s and the approximations $(x_i - x_j) - (\tilde x_i - \tilde x_j)$ in place of the $\alpha$'s,
  and noting that the $M$ in Lemma \ref{lem:tech} is less than twice $N$, and
  that the maximal module of the $\alpha$'s is less than $2 \epsilon$.

  Finally, for Part (\ref{it:3}) it suffices to observe that
  \[
  |\tilde a - \tilde b| \leq |\tilde a - a| + |a - b| + |b - \tilde b|
  \iff
  |\tilde a - \tilde b| - |\tilde a - a| - |b - \tilde b| \leq  |a - b| .
  \]
  Now the previous part applied to different permutations of $x_1, \dotsc, x_8$ certainly imply that
  \[
  |\tilde a - a|, |b - \tilde b| \leq  60 N^3 \epsilon.
  \]
\end{proof}


\section{Invariants of some Reed Muller codes}\label{sec-triv}

Using this algorithm, we found some non-trivial invariants of
particular Reed-Muller codes. This is not useful for determining
unknown weight enumerators, but rather shows an application of our
procedure.

\begin{pro}
  Let $\zeta_{2^m}$ be a $2^{m}$-root of unity, and let $u_m := \frac{\zeta_{2^m} + 1}{2}$.
  For any $m \geq 3$,
    \[
    \begin{bmatrix}
      u_m & u_m-1 \\
      u_m-1 & u_m
    \end{bmatrix}
    \in \Stab_{\GL_2(\bbC)}(W_{\RM_2(m-2, m)}(x,y)).
    \]
\end{pro}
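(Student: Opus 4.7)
Let $\calC := \RM_2(m-2, m)$ and write $\zeta := \zeta_{2^m}$. The plan is to exhibit the target matrix as a product of three matrices, each of which separately preserves $W_\calC$. A direct matrix computation yields the factorisation
\[
A_m \;:=\; \begin{bmatrix} u_m & u_m - 1 \\ u_m - 1 & u_m \end{bmatrix} \;=\; T \cdot \bigl(S_2\, D_\zeta\, S_2\bigr) \cdot T,
\qquad T := \begin{bmatrix}1 & 0 \\ 0 & -1\end{bmatrix},\qquad D_\zeta := \begin{bmatrix}1 & 0 \\ 0 & \zeta\end{bmatrix},
\]
with $S_2$ the MacWilliams matrix from $(\star)$: one checks
\[
S_2 D_\zeta S_2 \;=\; \tfrac{1}{2}\begin{bmatrix} 1+\zeta & 1-\zeta \\ 1-\zeta & 1+\zeta \end{bmatrix} \;=\; \begin{bmatrix} u_m & 1-u_m \\ 1-u_m & u_m \end{bmatrix},
\]
and conjugating by $T$ flips the sign of the off-diagonal entries to produce $A_m$. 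So it suffices to prove that both $T$ and $S_2 D_\zeta S_2$ lie in $\Stab_{\GL_2(\bbC)}(W_\calC)$.

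Invariance under $T$ is immediate: $\RM_2(m-2,m) \subset \RM_2(m-1,m)$ is contained in the parity-check code of length $2^m$, so every codeword of $\calC$ has even weight, $W_\calC(x,y)$ involves only even powers of $y$, and $T \cdot W_\calC = W_\calC$. For the middle factor, the crucial structural input is the explicit weight enumerator of the dual code $\calC^\perp = \RM_2(1,m)$: a nonconstant affine form on $\F_2^m$ vanishes on exactly half of $\F_2^m$, so
\[
W_{\calC^\perp}(x,y) \;=\; x^{2^m} + (2^{m+1} - 2)\, x^{2^{m-1}} y^{2^{m-1}} + y^{2^m}.
\]
All three $y$-exponents appearing are divisible by $2^{m-1}$, so $D_\zeta \cdot W_{\calC^\perp} = W_{\calC^\perp}$ (this is where the argument uses a $2$-power condition on $\zeta$).

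To chain this with MacWilliams, write $S_2 \cdot W_\calC = c\, W_{\calC^\perp}$ for the explicit scalar $c$ supplied by MacWilliams' Theorem; since $S_2^2 = I$, the analogous scalar for $\calC^\perp$ is forced to be $c^{-1}$. Applying $S_2$, then $D_\zeta$, then $S_2$ to $W_\calC$, and using the invariance of $W_{\calC^\perp}$ under $D_\zeta$ in the middle step, recovers $c^{-1}(c\, W_{\calC^\perp}) = W_\calC$, so $S_2 D_\zeta S_2 \in \Stab(W_\calC)$; combined with $T \in \Stab(W_\calC)$ and the factorisation above, this gives $A_m \in \Stab(W_\calC)$. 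The only delicate point I expect is precisely this bookkeeping of MacWilliams scalars: one must verify that the two factors $c$ and $c^{-1}$ cancel exactly through the two applications of $S_2$, leaving no residual scaling. Everything else is routine matrix algebra.
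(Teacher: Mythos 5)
Your factorisation $A_m = T\,(S_2 D_\zeta S_2)\,T$ is correct as matrix algebra, and the surrounding reduction is sound and genuinely different from the paper's proof (which just writes out $W_{\RM_2(m-2,m)}$ via MacWilliams and asserts the invariance can be checked directly): $T\in\Stab_{\GL_2(\bbC)}(W_\calC)$ because $\calC\subseteq\RM_2(m-1,m)=\RM_2(0,m)^\perp$ is an even-weight code, and the MacWilliams bookkeeping you worry about is fine --- with $S_2\cdot W_\calC=c\,W_{\calC^\perp}$ and $S_2\cdot W_{\calC^\perp}=c^{-1}W_\calC$ for $c=\#\calC\cdot 2^{-2^{m-1}}$, the two scalars cancel exactly, so $S_2D_\zeta S_2$ stabilises $W_\calC$ \emph{provided} $D_\zeta$ stabilises $W_{\calC^\perp}$.

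That proviso is where the argument breaks. The $y$-exponents of $W_{\calC^\perp}=x^{2^m}+(2^{m+1}-2)x^{2^{m-1}}y^{2^{m-1}}+y^{2^m}$ are indeed all divisible by $2^{m-1}$, but that only gives $D_\zeta$-invariance for $\zeta$ of order dividing $2^{m-1}$; what you need is divisibility by the \emph{order} of $\zeta$. If $\zeta$ has exact order $2^m$ then $\zeta^{2^{m-1}}=-1$, so $D_\zeta\cdot W_{\calC^\perp}=x^{2^m}-(2^{m+1}-2)x^{2^{m-1}}y^{2^{m-1}}+y^{2^m}\ne W_{\calC^\perp}$. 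This is not a repairable slip in your write-up: since $A_m$ sends $x+y\mapsto(2u_m-1)(x+y)=\zeta(x+y)$ and $x-y\mapsto x-y$, one computes
\[
A_m\cdot W_\calC=\tfrac{1}{2^{m+1}}\left((x+y)^{2^m}+2(2^m-1)\,\zeta^{2^{m-1}}(x+y)^{2^{m-1}}(x-y)^{2^{m-1}}+(x-y)^{2^m}\right),
\]
which equals $W_\calC$ if and only if $\zeta^{2^{m-1}}=1$. So the Proposition (and the paper's ``immediate check'') holds exactly when $\zeta$ is a $2^{m-1}$-th root of unity and fails for a primitive $2^m$-th root: for $m=3$, $W_\calC=x^8+14x^4y^4+y^8$ while $A_3\cdot W_\calC=\tfrac{1}{16}\left((x+y)^8-14(x+y)^4(x-y)^4+(x-y)^8\right)\ne W_\calC$. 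Your method is the right one, and its virtue is precisely that it isolates the needed divisibility condition; it shows the exponent in the statement should be $2^{m-1}$, not $2^m$.
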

\begin{proof}
The code $\RM_2(m-2, m)$ is the dual of $\RM_2(1, m)$, the code coming from the evaluation of linear
polynomials over $\F_2$ (Theorem 4, Chapter 13 \cite{MS}).
It is easy to see that the weight enumerator of $\RM_2(1,m)$ is
\[
W_{\RM_2(1, m)}(x,y) = X^{2^m} + 2(2^m-1)x^{2^{m-1}}y^{2^{m-1}} + y^{2^m}.
\]
Now by MacWilliams' Theorem,
\[
W_{\RM_2(m-2,m)} = \dfrac{1}{2^{m+1}} \left(
    (X-Y)^{2^m} + 2\cdot (2^{m} - 1)(X-Y)^{2^{m-1}} (X+Y)^{2^{m-1}} + (X+Y)^{2^m}
    \right).
\]
It is immediate to check that the above matrix fixes this
polynomial.
\end{proof}

On the other hand, Lemma \ref{lem-trivial} and Corollary \ref{cor:approx} give a way to show that some codes
have weight enumerators with trivial stabilizers.

\begin{thm}
  The following codes have weight enumerator with trivial stabilizer in $\PGL_2(\bbC)$, i.e. with
  stabilizer $\GL_2(\bbC)$ consisting only of scalar matrices:
  \[
  \RM_{4}(2,2), \quad \RM_{4}(3,2), \quad \RM_{4}(2,2),
  \]
  \[
  \PRM_{5}(3,2), \quad \PRM_{5}(3,2)^\perp = \PRM_{5}(5,2).
  \]
\end{thm}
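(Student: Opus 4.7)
The plan is to reduce each case to an explicit, computer-assisted verification driven by Lemma \ref{lem-trivial}. For each code $\calC$ in the list, I would compute its homogeneous weight enumerator $W_\calC(x,y)$, form the univariate polynomial $p(x) := W_\calC(x,1)$, and then exhibit five roots $z_1, \dotsc, z_5$ of $p(x)$ for which both ordered 4-tuples $(z_1,z_2,z_3,z_4)$ and $(z_1,z_2,z_3,z_5)$ are critical in the sense of \S\ref{sec-algo}. Lemma \ref{lem-trivial} then forces $\Stab_{\PGL_2(\bbC)}(W_\calC(x,y))$ to be trivial, which is exactly the claim (the lift to $\GL_2(\bbC)$ being the group of scalar matrices by the argument at the end of the proof of Lemma \ref{lemma-finite}).

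Checking criticality of a 4-tuple $(z_1,z_2,z_3,z_4)$ of pairwise distinct roots amounts to verifying, for each permutation $\sigma \in S_4 \setminus V_4$, that the cross ratio $[z_{\sigma(1)},z_{\sigma(2)},z_{\sigma(3)},z_{\sigma(4)}]$ differs from $[z_1,z_2,z_3,z_4]$. Clearing denominators, this is a statement of the form $a \ne b$ with $a,b$ each a product of four differences of $z_i$'s, i.e. precisely the set-up of Corollary \ref{cor:approx}(\ref{it:3}). Since the $z_i$ are not available exactly, I would work with high-precision numerical approximations $\tilde z_i$ carrying a certified error bound $\epsilon < 1/2$, set $N := \max_i|\tilde z_i| + \epsilon$, and verify in each case the strict inequality $|\tilde a - \tilde b| > 120\, N^3 \epsilon$; by Corollary \ref{cor:approx}(\ref{it:3}) this rigorously implies $a \ne b$ and hence criticality.

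Concretely, for each code I would proceed as follows: (i) compute $W_\calC(x,y)$ from the definition of $\RM$ or $\PRM$, applying MacWilliams' Theorem in the self-dual case $\PRM_5(3,2)^\perp = \PRM_5(5,2)$; (ii) compute approximate roots of $p(x)$ using interval or multiprecision arithmetic, extracting a certified $\epsilon$ and bound $N$; (iii) loop over 5-element subsets of roots and, for each, test the finitely many cross-ratio inequalities predicted by Corollary \ref{cor:approx}(\ref{it:3}); (iv) stop as soon as one quintuple passes the test. As an independent sanity check, the algorithm of \S\ref{sec-algo} can be run on each weight enumerator and should return only scalar matrices.

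The main obstacle is a practical rather than conceptual one: producing rigorous numerical certificates. The algebraic reduction to Lemma \ref{lem-trivial} and Corollary \ref{cor:approx} is clean, but the weight enumerators of codes such as $\PRM_5(5,2)$ have moderately large degree, so one must compute their roots with enough precision to make $120\,N^3\epsilon$ smaller than the (a priori unknown) gaps between conjugate cross ratios. I would address this by increasing precision adaptively until the inequalities in step (iii) become verifiable, and document the resulting $\epsilon$ and $N$ so that the bound can be audited.
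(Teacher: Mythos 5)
Your overall strategy is exactly the paper's: reduce to Lemma \ref{lem-trivial} by exhibiting, among the roots of $W_\calC(x,1)$, two critical $4$-tuples differing in a single element, and certify the required cross-ratio inequalities from approximate roots via Corollary \ref{cor:approx}(\ref{it:3}). However, your operational test for criticality is too weak, and as written the verification would not establish the hypothesis of Lemma \ref{lem-trivial}. By the definition in \S\ref{sec-algo}, a $4$-tuple $\zeta=(z_1,z_2,z_3,z_4)$ is critical if its cross ratio differs from that of \emph{every} $4$-tuple of distinct elements of the full root set $\mathcal{Z}$ other than the four $V_4$-permutations of $\zeta$; this includes $4$-tuples whose underlying set is different from $\{z_1,z_2,z_3,z_4\}$. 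You only compare $[\zeta]$ against $[\zeta^\sigma]$ for $\sigma\in S_4\setminus V_4$, i.e.\ against rearrangements of the same four points. This weaker property does not suffice for the proof of Lemma \ref{lem-trivial}: there, one uses that an element $\bar A$ of the stabilizer maps $\zeta$ to a $4$-tuple of roots with the same cross ratio, and concludes from criticality that the image has the \emph{same underlying set} $\{z_1,z_2,z_3,z_4\}$. If some other $4$-tuple of roots happened to share the cross ratio of $\zeta$, that conclusion would fail, and your check would not detect it.

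The repair is routine and stays within your framework: for each candidate $\zeta$ you must run the inequality test of Corollary \ref{cor:approx}(\ref{it:3}) against all $n(n-1)(n-2)(n-3)$ ordered $4$-tuples of distinct roots (modulo the $V_4$-symmetry), not just the $20$ nontrivial permutations of $\zeta$ itself. Each comparison is still of the form $|a-b|>0$ with $a,b$ products of four differences of roots, so the certified bound $|\tilde a-\tilde b|>120N^3\epsilon$ applies verbatim; the cost is only a larger (but still finite) list of inequalities and possibly higher precision. With that correction your argument coincides with the paper's proof, and your added care about certified error bounds and adaptive precision is if anything more explicit than what the paper records.
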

\begin{proof}
  For every code $\calC$ in the list above, we can proceed as follows:
  it can be checked that $W_\calC(x,1)$ has at least 5 roots in $\bbC$.
  The goal is two find two critical $4$-tuples which differ only in one element, as in
  Lemma \ref{lem-trivial}.
  The problem lies in the fact that we can only work with approximations.
  However, a computer software like {\sc Magma} gives the roots of a polynomial up to arbitrarily
  precision, which will help us finding critical 4-tuples.

  Let $x_1, \dotsc, x_8$ be complex points, and let $a,b$ be as in part (\eqref{it:3}) of
  Corollary \ref{cor:approx}.
  Easy algebraic manipulations show that for $[x_1, x_2, x_3, x_4] \neq [x_5, x_6, x_7, x_8]$ if
  and only if $|a-b| >0$.
  Therefore, by choosing the approximations of the roots of $W_\calC(x,1)$ close enough, it is
  possible to get the $\epsilon$ of Corollary \ref{cor:approx} arbitrarily small.
  This gives us a way to find critical 4-tuples, as desired.
\end{proof}


\section*{Acknowledgements}
The results of this paper are partially in the Master thesis
\cite{BFBM} of the second author. The authors express their deep
gratitude to Eva Bayer Fluckiger and Peter Jossen for their support
and the fruitful discussions.


\bibliographystyle{alpha}

\newpage
\footnotesize

\noindent MARTINO BORELLO
\vspace{1mm}\\
{\sc \'{E}cole Polytechnique F\'{e}d\'{e}rale de Lausanne\\
SB MathGeom CSAG\\
B\^{a}timent MA\\
Station 8\\
CH-1015 Lausanne\\
Switzerland}
\vspace{1mm}\\
E-mail address:
\href{mailto:martino.borello@epfl.ch}{martino.borello@epfl.ch}

\vspace{3mm}

\noindent OLIVIER MILA
\vspace{1mm}\\
{\sc Universit\"{a}t Bern\\
Mathematisches Institut (MAI)\\
Sidlerstrasse 5\\
CH-3012 Bern\\
Switzerland}
\vspace{1mm}\\
E-mail address:
\href{mailto:olivier.mila@math.unibe.ch}{olivier.mila@math.unibe.ch}

\end{document}